\documentclass[conference]{IEEEtran}
\IEEEoverridecommandlockouts
\usepackage{amsmath,amsfonts}
\usepackage{amssymb,amsmath,amsthm}

\usepackage[letterpaper, top=1.78cm, bottom=2.58cm, left=1.7cm, right=1.7cm]{geometry}

\newtheoremstyle{custom}
  {0.5em} 
  {0.5em} 
  {\itshape} 
  {0.5em} 
  {\itshape\space} 
  {:} 
  {0.5em} 
  {\thmname{#1}\thmnumber{\textit{\hspace{0.5em}#2}}\thmnote{#3}} 

\usepackage{tikz}

\usepackage[flushleft]{threeparttable}

\newtheorem{Definition}{Definition}

\newtheorem{Remark}{Remark}

  {\proof}{\proofend}
\newtheorem{proposition}{Proposition}
\newtheorem{property}{Property}

\makeatletter
\renewenvironment{proof}[1][\proofname]{\par
  \pushQED{\qed}%
  \normalfont \topsep6\p@\@plus6\p@\relax
  \trivlist
  \itemindent1.5em 
  \item[\hskip\labelsep
        \itshape
        #1\@addpunct{:}]\ignorespaces
}{%
  \popQED\endtrivlist\@endpefalse
}
\makeatother

\usepackage{bbm}
\usepackage{algorithm,algorithmic}
\usepackage{subcaption}
\usepackage{array}
\usepackage{textcomp}
\usepackage{stfloats}
\usepackage{url}
\usepackage{verbatim}
\usepackage{graphicx}
\usepackage{cite}
\usepackage{color}
\def\BibTeX{{\rm B\kern-.05em{\sc i\kern-.025em b}\kern-.08em
    T\kern-.1667em\lower.7ex\hbox{E}\kern-.125emX}}

\newcommand{\LSDC}{\mathtt{LS}}
\newcommand{\MSDC}{\mathtt{MS}}
\newcommand{\Nf}{N_\mathrm{f}}

\newcommand{\E}{\mathrm{E}}
\newcommand{\T}{\mathrm{T}}

\newcommand{\p}{\mathrm{p}}
\newcommand{\dd}{\mathrm{d}}

\newcommand{\f}{\mathrm{f}}

\newcommand{\replica}{\mathrm{r}}
\newcommand{\decoding}{\mathrm{c}}

\begin{document}



\title{Network Availability Enhancement in Low-Altitude HetNets: A Cross-Layer Design Perspective}

\author{

\thanks{This work is supported in part by the National Natural Science Foundation of China (Grant No. 62121001, 62495020, and 62461160329), in part by Key Research and Development Program of Shaanxi (Grant No. 2024CY2-GJHX-82),  in part by Fundamental Research Funds for the Central Universities (No. QTZX26093), and in part by 2025 Open Fund Project of the State Key Laboratory of Power Grid Safety (Project Title: Research on Multi-dimensional Resilience Assessment and Enhancement Technologies for Power Systems Considering Primary-Secondary System Integration under Rain, Snow, and Freezing Disasters, No. XTB51202501740). The work of H. Q. Ngo was supported by the UK ISPF programme through the UKRI EPSRC (grant ID: UKRI554, led by University of East Anglia) under pilot project ``BEAM-RAN'' (UEA ref: R213867). The work of M. Matthaiou was supported by the European Research Council (ERC) under the European Unions Horizon 2020 research and innovation  programme (grant agreement No. 101001331).}

\IEEEauthorblockN{
\fontsize{0.36cm}{1cm}\selectfont
Teng~Wu$^\dag$$^\ddagger$, Jiandong~Li$^\dag$, Junyu~Liu$^\dag$, Min~Sheng$^\dag$, Mohammadali Mohammadi$^\ddagger$,  Hien Quoc Ngo$^\ddagger$, and Michail Matthaiou$^\ddagger$}
\IEEEauthorblockA{
$^\dag$State Key Laboratory of ISN, Institute of Information Science, Xidian University, Xi’an, Shaanxi, 710071, China\\
$^\ddagger$Centre for Wireless Innovation (CWI), Queen's University Belfast, Belfast, BT3 9DT, U.K.\\
Email: t.wu@stu.xidian.edu.cn}}


\maketitle

\begin{abstract}
This paper proposes a computing–communication resource interchange method to enhance network availability (NA) in low-altitude heterogeneous networks (LA-HetNets). In these networks, communication resource conflicts and imbalances, caused by extreme heterogeneity (diverse mobility, mixed delays, and hybrid transmission), and cross-regional traffic, reduce reliability and lead to unavailability. Restoring NA requires additional communication resources, yet dynamic cross-regional scheduling is limited, making locally redundant computing resources an alternative to reduce communication resource overhead. While computing resources address medium access control (MAC)-layer unreliability, physical (PHY)-layer functionalities still rely on communication resources. Thus, it remains unclear whether increasing computing resources alone can achieve target NA, especially under greater heterogeneity. We elaborate on the impact of heterogeneity on NA and show that expanding computing resources alone cannot meet target NA under high heterogeneity, as NA degrades sharply due to increased communication capability demands. To overcome this, we propose a cross-layer optimization method enabling computing–communication resource interchange to address both MAC- and PHY-layer unreliability. By reducing processing delays with computing resources while ensuring MAC-layer reliability, our method extends PHY-layer transmission delay and expands communication resources. Simulations demonstrate our approach's superiority in achieving target NA under greater heterogeneity, revealing that computing-communication resource interchange fulfills expanding communication capability demands more effectively than conventional resource overhead reduction.
\end{abstract}
\begin{IEEEkeywords}
Computing-communication resource interchange, cross-layer optimization, low-altitude heterogeneous networks (LA-HetNets).
\end{IEEEkeywords}

\vspace{-0.3em}
\section{Introduction}
\vspace{-0.2em}
Low-altitude networks, employing unmanned aerial vehicle (UAV)-mounted flying access points (FAPs), have emerged as a pivotal paradigm for supporting communications in complex environments \cite{Pan2026TWC}. These networks are essential for a wide range of applications—from smart agriculture and routine industrial inspections to high-stakes situations—where ground infrastructure is unavailable or compromised, e.g., during disaster emergency operations \cite{Ding2026}. However, maintaining a sustained NA remains difficult, as multiple fundamental constraints introduce significant unreliability in communication, where the NA is the probability that the quality-of-service (QoS) requirements, in terms of delay and reliability, for each service are satisfied \cite{LiuTCOM2025, Wu2026TCOM}. On one hand, the inherent mobility of both FAPs and user equipment (UEs), combined with wireless fading, creates highly dynamic channels, which deteriorates PHY-layer link reliability (e.g., decoding errors and transmission outage) \cite{Pan2026TWC, Wu2026TCOM}. On the other hand, the network must accommodate heterogeneity induced by diverse UE mobility profiles, mixed delay constraints, and hybrid transmission modes, such as unicast for ultra-reliable and low-latency communication and multicast for other delay-sensitive traffic \cite{Ding2026, Yao2022MNET}. This heterogeneity results in communication resource allocation conflicts \cite{Alsenwi2019CL}. Moreover, traffic in low-altitude networks exhibits cross-regional characteristics, triggering severe communication resource allocation imbalances \cite{Peng2021JSAC}. Collectively, these resource allocation conflicts and imbalances  lead to communication reliability degradation \cite{Alsenwi2019CL, Peng2021JSAC}.

To address unreliability and resource allocation conflicts, a unified broadcast scheme can streamline scheduling, while coordinated multi-point (CoMP) and multi-connectivity (MC) techniques leverage spatial and frequency diversity to enhance the stability of wireless links \cite{Pan2026TWC, Li2024}. To circumvent the prohibitive costs of cross-regional communication resource scheduling, exploiting locally abundant computing resources, i.e., central processing unit (CPU) cycles, emerges as a viable alternative to address communication resource allocation imbalances \cite{Peng2021JSAC}. Conventionally, mobile edge computing leverages such local computing resources for data compression or task offloading to alleviate communication resource pressure \cite{Peng2021JSAC}. However, when communication resources are scarce, such strategies fail to address PHY-layer unreliability, which inherently requires sufficient communication resources. 

Therefore, whether adding computing resources can achieve the target NA in LA-HetNets remains a critical open problem, particularly under greater heterogeneity, directly motivating our work. The main contributions of this paper are as follows:
\begin{itemize}
\vspace{-0.25em}
    \item To address this open problem, we first provide analytical expressions that capture the impact of heterogeneity on NA for FAP CoMP-enabled LA-HetNets under a unified MC broadcast scheme. Our analysis reveals that adding computing resources can resolve MAC-layer unreliability to enhance NA. However, simply scaling up computing resources cannot achieve the target NA under increasing heterogeneity. This is because, under the unified scheme, greater heterogeneity inflates the per-link service load, thereby increasing communication capability demands.
    \item To break the above bottleneck, we propose a cross-layer optimization to enable computing-communication resource interchange. Specifically, reducing processing delay by allocating additional computing resources—while maintaining MAC-layer reliability— extends the PHY-layer transmission delay, effectively increasing communication resources. Simulations demonstrate that our approach successfully achieves the target NA, supporting greater heterogeneity than conventional computing-only expansion strategies. This demonstrates that trading computing resources for communication resources is more effective than merely alleviating communication resource constraints through conventional strategies in meeting communication capability demands. 
\end{itemize}  
\vspace{-0.1em}
\textit{Notation:} Bold lowercase letters denote vectors; $(\cdot)^T$ and $(\cdot)^H$ represent the transpose and Hermitian transpose, respectively; ${f_{{Q^{ - 1}}}}(  \cdot  )$ is the inverse Q-function, while the Q-function is ${f_Q}( x ) = \frac{1}{{\sqrt {2\pi } }}\int_x^\infty  {\exp ( { - \frac{{{y^2}}}{2}} )dy} $; $\mathbbm{1}[ Y ]$ denotes the indicator function of event $Y$, where $\mathbbm{1}[ Y ]$ = 1 if event $Y$ is true, and $\mathbbm{1}[ Y ]$ = 0 otherwise; $\mathcal{CN}(0,\sigma^2)$ denotes a circularly symmetric complex Gaussian random variable (RV) with variance $\sigma^2$. Finally, $\mathbb{E}_X\{\cdot\}$ and $\mathbb{P}_X\{\cdot\}$ denote the statistical expectation and probability with respect to the RV $X$, respectively.

\section{System Model}\label{Sec_System_Model}
\vspace{-0.2em}
\subsection{Network Model}
\vspace{-0.2em}
We consider a downlink FAP CoMP-enabled LA-HetNet, which consists of $L$ fixed-wing UAV-mounted single-antenna FAPs and $M$ single-antenna UEs with diverse mobility profiles. All FAPs are connected to a data center-level computing unit (DCCU) via wireless fronthaul links for centralized signal processing, while the DCCU can be deployed on the ground or in the sky \cite{Wu2026TCOM}. The service area ${\mathcal A}$ is modeled as a circular region centered at $(0,0,0)$ with a radius $W_{\mathrm{D}}$. The UEs include ground UEs and aerial UEs, where the former are randomly distributed within ${\mathcal A}$, while the latter are randomly distributed over the altitude range ${\mathcal H} \!=\! \left[ {{h_{\min }},{h_{\max }}} \right]$ above ${\mathcal A}$. Moreover, ${h_{\min }}$ and ${h_{\max }}$ are the minimum and maximum altitudes, wherein aerial UEs are located, respectively. All FAPs follow periodic circular flight trajectories \cite{Wu2026TCOM}. The flight altitude, radius, and speed of each FAP are $h_{\mathrm{F}}$, $W_{\mathrm{F}}$, and $v_{\mathrm{F}}$, respectively. The UE mobility in LA-HetNets is characterized by the maximum relative speed (${v_{\mathrm{r}}}$) between the UE and FAPs, which varies across different UEs. The flight period of each FAP is $T \!=\! {{2\pi {W_{\mathrm{F}}}} \mathord{\left/ {\vphantom {{2\pi {W_{\mathrm{F}}}} {{v_{\mathrm{F}}}}}} \right. \kern-\nulldelimiterspace} {{v_{\mathrm{F}}}}}$. To facilitate a tractable performance analysis, we equally divide $T$ into $N_{\mathrm{T}}$ time slots with a duration ${T_{\mathrm{S}}}$, yielding $N_{\mathrm{T}} \!=\! {T \mathord{\left/ {\vphantom {T {{T_{\mathrm{S}}}}}} \right. \kern-\nulldelimiterspace} {{T_{\mathrm{S}}}}}$. This discrete-time division ensures that the transmission distance $d = \sqrt {d_{\mathrm{H}}^2 + {d_{\mathrm{V}}^2}} \le {\tilde d}$ from UEs to FAPs can be considered constant within each time slot and varies only between different slots \cite{Peng2021JSAC, Wu2026TCOM}, where ${d_{\mathrm{H}}}$ and $d_{\mathrm{V}}$ denote the horizontal and vertical distances, respectively. Moreover, ${\tilde d}  \!\! = \!\! \sqrt  {\!{{\tilde d_{\mathrm{H}}^2}} \!+\! {{{\tilde{d}_{\mathrm{V}}^2}}}}$ is the farthest transmission distance from UEs to FAPs, where $\tilde d_{\mathrm{H}}  \! = \! 2{W_{\mathrm{D}}} + 2{W_{\mathrm{F}}}$ and ${\tilde{d}_{\mathrm{V}}}  \! = \! \max \left\{ {h_{\mathrm{F}},\left| {h_{\mathrm{F}} \!-\! {h_{\min }}} \right|,\left| {h_{\mathrm{F}} \!-\! {h_{\max }}} \right|} \right\}$ are the farthest horizontal and vertical distances, respectively.

\subsection{Transmission Scheme and Traffic Scheduling}\label{sub_sec_transmission}
Low-altitude network traffic experiences hybrid transmission modes and mixed delay constraints \cite{Yao2022MNET, Ding2026}, comprising unicast services under more stringent delay constraints (MSDC) and multicast services under less stringent delay constraints (LSDC). Inspired by \cite{Li2024}, we adopt a unified MC broadcast scheme to facilitate the satisfaction of mixed delay constraints. As shown in Fig. \ref{transmission_scheme}, the DCCU concatenates all packets and forwards them to all FAPs, and each FAP then broadcasts these aggregated packets to all UEs over the same time-frequency resources. For downlink broadcasting, the DCCU performs closed-loop precoding utilizing channel state information (CSI) acquired from uplink training \cite{Sadeghi2018TWC}. The MC transmission scheme with packet duplication is used to improve reliability, where packet replicas are transmitted to UEs over independent subchannels. Note that each subchannel bandwidth ($B$) is partitioned as $B = B_{\LSDC} + B_{\MSDC}$, where $B_{\varsigma}$, $\varsigma \in \{\LSDC, \MSDC\}$ denotes the bandwidth allocated to the services under LSDC and MSDC, respectively.


\begin{figure}[t]
\centering
\includegraphics[width=3.2in]{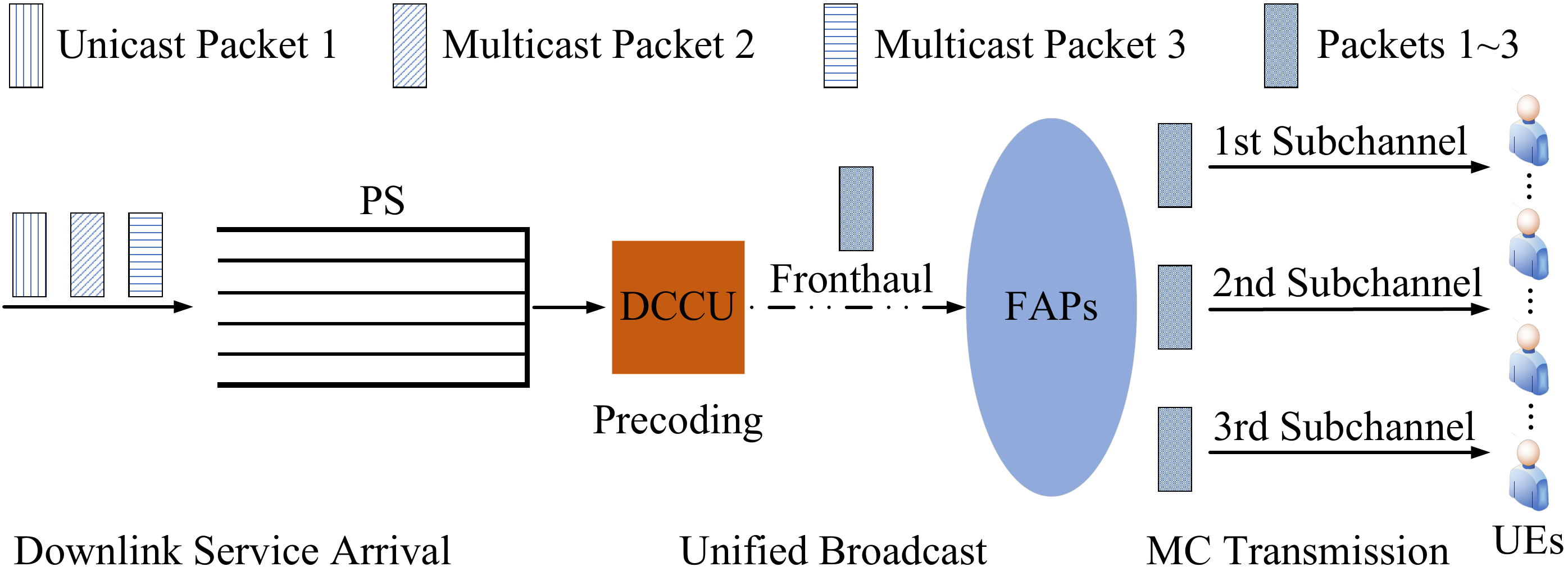}
\caption{The unified MC broadcast scheme and PS server model.} 
\label{transmission_scheme}
\end{figure}

To handle the significantly expanded transmission volume from the unified MC broadcast scheme and guarantee mixed delay constraints, we adopt a processor sharing (PS) server model in the DCCU. By time-sharing computing resources, the PS server immediately processes massive arriving packets, avoiding prolonged head-of-line blocking \cite{She2019IoTJ}. Due to the complex, unknown aggregate downlink arrivals of varying traffic, we employ a generalized G/G/1-PS server model \cite{She2019IoTJ}, where the first and second ``G" denote arbitrary distributions for the service arrival process and required CPU cycles per packet, respectively, while ``1-PS" indicates a single PS server in the buffer. For each UE under the unified scheme, the service load and burstiness are characterized by the average arrival rate $\theta_{\varsigma}$ (in packets/frame) and variance $\sigma_{\varsigma}^2$ (in packets$^2$/frame$^2$) \cite{Wu2026TCOM}. Note that $\theta_{\varsigma}$ and $\sigma_{\varsigma}^2$ are the respective sums of the average arrival rates and variances for services under mixed delay constraints and hybrid transmission modes. For the PS server, the required CPU cycles per bit, $\Omega_{\p}$, is a RV following an arbitrary distribution with mean $\bar{\Omega}_{\p}$ and variance $\sigma_{\p}^2$. Finally, $\Omega_{\mathrm{R}}$ (in CPU cycles/frame) is the DCCU processing rate that denotes the computing resources, while $\Omega_{\mathrm{O}}$ (in CPU cycles/packet) is the additional processing overhead in practical PS server models \cite{She2019IoTJ, Li2024}.\footnote{Analytical formulations use packets/frame and CPU cycles/frame, whereas simulations report their per-second equivalents for practical feasibility.}

\subsection{Channel and Signal Models}
In the proposed design, we divide time-frequency resources into coherence intervals based on the bandwidth $B$ and time slot $T_{\mathrm{S}}$ to approximate the channel as quasi-static and frequency-flat \cite{Sadeghi2018TWC}. Specifically, $B$ is restricted by the coherence bandwidth $B_{\mathrm{C}}$, while $T_{\mathrm{S}}$ equals the transmission time interval (TTI), which is the minimal network time granularity.\footnote{In low-altitude scenarios with a carrier frequency of 2 GHz and a maximum relative velocity of 50 m/s, the coherence time ($T_{\mathrm{C}} \approx 3$ ms) is much greater than TTI (${T_{\mathrm{S}}} = 0.1$ ms) \cite{Tse2005}.} \par


Under the realistic constraints of imperfect CSI, we consider the widely-adopted co-pilot strategy to estimate the composite channel, which is a linear combination of the individual channels of the broadcast UEs \cite{Sadeghi2018TWC}. Based on minimum-mean-square error (MMSE) channel estimation and maximum ratio transmission (MRT) precoding, and under the practical assumption that the receiver relies solely on the statistical channel mean for signal decoding, the received signal at the $m$th UE over any coherence interval can be written as \cite{Sadeghi2018TWC}
\begin{align}
    y_m = \sqrt {{p_{\mathrm{t}}}} [ {{\mathbb{E}_\psi }\{ {{\mathbf{g}}_m^H{\mathbf{v}}} \} + {\mathbf{g}}_m^H{\mathbf{v}} - {\mathbb{E}_\psi }\{ {{\mathbf{g}}_m^H{\mathbf{v}}} \}} ]{s_\varsigma } + {z_\varsigma },
\end{align}
where $p_{\mathrm{t}}$ is the transmit power of each FAP; $s_{\varsigma}$ is the data symbol broadcasted to each UE with $\mathbb{E}_{s}\{|s_{\varsigma}|^2\} = 1$; ${{\bf{g}}_m} = [ {{g_{1,m}}, \ldots ,{g_{L,m}}} ]^{T} \in {\mathbb{C}^{L \times 1}}$ is the CSI vector from $L$ FAPs to the $m$th UE; ${g_{l,m}} = \sqrt {\beta( d_{l,m} )} {\psi_{l,m}}$ is the channel from the $l$th FAP to the $m$th UE; $\beta( d_{l,m} )$, ${\psi_{l,m}}$, $d_{l,m}$ represent the path loss, the effect of small-scale fading and shadowing, and transmission distance between the $l$th FAP and the $m$th UE, respectively; $\beta( d_{l,m} ) = \mathcal{A}d_{l,m}^{ - 2 }$, where $\mathcal{A}( {{\mathrm{dB}}} ) =  - 20{\log _{10}}(4\pi / \lambda )$ denotes the path loss at 1 m with wavelength $\lambda = {{{v_{\mathrm{L}}}} \mathord{\left/
{\vphantom {{{v_{\mathrm{L}}}} {{f_{\mathrm{c}}}}}} \right.
\kern-\nulldelimiterspace} {{f_{\mathrm{c}}}}}$, with ${f_{\mathrm{c}}}$ being the carrier frequency, and ${v_{\mathrm{L}}}$ being the speed of light; ${\psi _{l,m}}$ is modeled via the $\kappa$-$\mu$ shadowed fading model with parameters $\kappa$, $\mu$, and $\bar{m}$ \cite{LiuTCOM2025}; $z_{\varsigma} \sim {\mathcal C}{\mathcal N}\left( {0,B_{\varsigma}N_0} \right)$ is the additive white Gaussian noise with $N_0$ being the noise power spectral density; ${\mathbf{v}} = \frac{{{{\bf{\hat g}}}}}{{\sqrt {\mathbb{E}_\psi\{ {{{\| {{\mathbf{\hat g}}} \|}^2}} \}} }} \in {\mathbb{C}^{L \times 1}}$ is the MRT precoding vector; ${\bf{\hat g}} \!=\! [ {{{\hat g}_1}, \ldots ,{{\hat g}_L}} ]^{T} \in {\mathbb{C}^{L \times 1}}$ is the estimated composite channel vector from $L$ FAPs to $M$ UEs; ${g_l}{\mathrm{ = }}\sum\nolimits_{m = 1}^M {{g_{l,m}}} $ and ${{\hat g}_l}{\mathrm{ = }}\frac{{\sum\nolimits_{m = 1}^M {\beta \left( {{d_{l,m}}} \right)} }}{{\sum\nolimits_{m = 1}^M {\beta \left( {{d_{l,m}}} \right)}  + 1/\left( {{\rho _{\mathrm{u}}}\xi } \right)}}\left( {{g_l} + {w_l}} \right)$ are the composite channel and estimated composite channel between the $l$th FAP and $M$ UEs, respectively; ${w_l} \sim \mathcal{CN}(0,{\frac{1}{{{\rho _{\mathrm{u}}}\xi }}})$ denotes the additive receiver noise after being projected onto the pilot sequence \cite{Sadeghi2018TWC}, while the channel estimation error is ${\hat w_l} = {g_l} - {\hat g_l}$. Finally, $\rho_{\mathrm{u}} = p_{\mathrm{u}}/(B_{\varsigma}N_0)$ is the transmit signal-to-noise ratio (SNR) in uplink channel training, where $p_{\mathrm{u}}$ is the transmit power of the pilots and $\xi$ is the pilot length.

Then, under realistic wireless fronthaul, the effective SNR at the $m$th UE over any coherence interval is \cite{Sadeghi2018TWC}
\begin{align}\label{eq_achievable_SNR}
    \!\!{\gamma^{\varsigma} _{m,\E}}(M,B_{\varsigma}) \!=\! \frac{{{\phi _{{\mathrm{wf}}}}p_{\mathrm{t}}{{\big| {\mathbb{E}_\psi\{ {{\mathbf{g}}_m^H{\mathbf{v}}} }\} \big|}^2}}}{{p_{\mathrm{t}}\mathbb{E}_\psi\{ {{{| {{\mathbf{g}}_m^H{\mathbf{v}}} |}^2}} \} \!-\! p_{\mathrm{t}}{\big|{ {\mathbb{E}_\psi\{ {{\mathbf{g}}_m^H{\mathbf{v}}} \}} }\big|^2} \!+\! B_{\varsigma}{N_0}}}\!,\!
\end{align}
where $\phi_{\mathrm{wf}} \in ( {0,1} )$ is the SNR loss due to wireless fronthaul.

\section{Preliminary Analysis}
\subsection{QoS Requirements}\label{sub_QoS_requirement}
\vspace{-0.1cm}
In downlink, QoS requirements of each service in terms of delay and reliability are represented by the total downlink delay bound $D_{\max}^{\varsigma}$ and the target downlink overall packet loss (dOPL) probability $\varepsilon_{\max}^{\varsigma}$, respectively \cite{Li2024}. According to 3GPP, downlink delay includes transmission, queuing, processing, propagation, and routing delays \cite{Li2024}. Since routing and propagation delays are deterministic, and head-of-line queuing is avoided by the PS server model \cite{She2019IoTJ}, $D^{\varsigma}$ is dominated by the transmission delay ($D_{\dd}^{\varsigma}$) and processing delay ($D_{\p}^{\varsigma}$). Then, the delay constraint is satisfied by:
\begin{equation}\label{eq_delay_constraint}
D^{\varsigma} = D_{\dd}^{\varsigma} + D_{\p}^{\varsigma} \le D_{\max }^{\varsigma}.
\end{equation}
A short frame structure is employed for satisfying delay constraints \cite{Dong2021TWC, Li2024}, where the frame duration $T_{\f} \!=\! T_{\f}^{(\dd)} + T_{\f}^{(\mathrm{c})}$ equals the TTI. Here, $T_{\f}^{(\dd)}$ and $T_{\f}^{(\mathrm{c})}$ are the data transmission and control signaling (orthogonal pilots for CSI estimation) durations, respectively. Then, the pilot length can be expressed as $\xi = BT_{\f}^{({\mathrm{c}})}$. Within the delay constraints, we define the total number of frames as ${N_{\f}^{\varsigma}}(D_{\max }^{\varsigma}) \!=\! \frac{{D_{\max }^{\varsigma}}}{{{T_{\f}}}}$. Similarly, the numbers of frames occupied by the transmission delay and processing delay are denoted by $N_{\f}^{\dd}( {{D_{\dd}^{\varsigma}}} ) \!=\! \frac{{{D_{\dd}^{\varsigma}}}}{{{T_{\f}}}} \ge 1$ and $N_{\f}^{\p}( {{D_{\p}^{\varsigma}}} ) \!=\! \frac{{{D_{\p}^{\varsigma}}}}{{{T_{\f}}}} \ge 1$, respectively.  \par

Following \cite{Wu2026TCOM}, packets with decoding errors are discarded. Given $D_{\dd}^{\varsigma} + D_{\p}^{\varsigma} = D_{\max}^{\varsigma}$, delayed packets are also discarded due to transmission outages (failing to transmit within $D_{\dd}^{\varsigma}$ \cite{Tse2005}) or processing delay violations (failing to process within $D_{\p}^{\varsigma}$ \cite{She2019IoTJ}). Thus, to effectively reflect reliability, the dOPL probability $\varepsilon^{\varsigma}$ accounts for the transmission failure probability $\varepsilon_{\dd}^{\varsigma}$ (due to decoding errors and outages) and the processing delay violation probability $\varepsilon_{\p}^{\varsigma}$. Then, the reliability requirement is ensured by:
\begin{equation}\label{eq_reliability_constraint}
\varepsilon^{\varsigma} = 1 - (1 - \varepsilon_{\p}^{\varsigma})(1 - \varepsilon_{\dd}^{\varsigma}) \le \varepsilon_{\p}^{\varsigma} + \varepsilon_{\dd}^{\varsigma} \le \varepsilon_{\max}^{\varsigma}.
\end{equation}

\subsection{Wireless Link Quality}
For delay-sensitive services, the blocklength of channel coding is finite in practice \cite{Dong2021TWC}. Then, with the effective SNR $\gamma$, the achievable service rate (packet/slot) under LSDC is \cite{Dong2021TWC} 
\begin{equation}\label{eq_Rate_LDC}
{R^{{\LSDC}}} (B_{\LSDC}, \gamma) = \frac{B_{\LSDC}T_{\f}^{(\dd)}}{{{\varpi ^{{\LSDC}}}\ln 2}}C\left( {\frac{\gamma }{\phi _{{\mathrm{fb}}}}} \right),
\end{equation}%
where ${\varpi ^{{\LSDC}}}$ is the packet size (in bit) of the services under LSDC; $C\left( \gamma  \right) = \ln \left( {1 + \gamma } \right)$ is Shannon’s capacity in the infinite blocklength regime, while ${\phi _{{\mathrm{fb}}}} > 1$ is the SNR gap due to finite blocklength (FBL).  
\par 

When the delay constraints are more stringent, the blocklength of channel coding is significantly shorter \cite{Dong2021TWC}. As a result, decoding errors cannot be ignored under MSDC. According to FBL information theory, the achievable service rate $R^{\MSDC}$ (packet/slot), with the effective SNR $\gamma $ and a given decoding error probability ${\varepsilon_{\decoding}^{\MSDC}}$, is expressed as \cite{Dong2021TWC, Li2024} 
\begin{equation}\label{eq_rate_MDC}
\!\!{R^{\MSDC}}(B_{\MSDC}, \gamma) = \frac{B_{\MSDC}T_{\f}^{(\dd)}}{{{\varpi ^{\MSDC}}\ln 2}}\bigg[ {C( \gamma  )- \sqrt {\frac{{V( \gamma )}} {{B_{\MSDC}T_{\f}^{(\dd)}}}} {f_{{Q^{ - 1}}}}( {{{\varepsilon_{\decoding}^{\MSDC}}}} )} \bigg],\!
\end{equation}
where ${\varpi ^{{\MSDC}}}$ is the packet size (in bit) of the services under MSDC and $V( \gamma  ) = 1 - \frac{1}{{{{( {1 + \gamma } )}^2}}}$ is the channel dispersion.

Based on \eqref{eq_Rate_LDC} and \eqref{eq_rate_MDC}, ${\varepsilon_\dd^{\LSDC}}$ comes from transmission outage, and ${\varepsilon_\dd^{\MSDC}}$ comes from decoding error and transmission outage. Thus, the decoding error probability under LSDC is ${\varepsilon_{\decoding}^{\LSDC}} = 0$ and the decoding error probability under MSDC is ${\varepsilon_{\decoding}^{\MSDC}} > 0$.

\subsection{NA Definition}
The NA is defined as the probability that the QoS requirements, in terms of delay and reliability, for each UE's service are satisfied \cite{LiuTCOM2025, Wu2026TCOM}. In the service coexistence scenario under LSDC and MSDC, we partition the total UE set ${\mathcal M} = \{ {1, \ldots ,M} \}$ into two subsets: ${\mathcal M}_{\LSDC}$ subject to the LSDC and ${\mathcal M}_{\MSDC}$ subject to the MSDC, i.e., ${\mathcal M}_{\LSDC} \subseteq {\mathcal M}$ and ${\mathcal M}_{\MSDC} \subseteq {\mathcal M}$. Specifically, ${\mathcal M}_{\LSDC}$ consists of $M_{\LSDC}$ UEs and ${\mathcal M}_{\MSDC}$ consists of $M_{\MSDC}$ UEs, satisfying $M_{\LSDC} + M_{\MSDC} = M$. Thus, the NA can be expressed as 
\begin{equation}\label{eq_NA_define}
\begin{aligned}
 \eta   \buildrel \Delta \over  
 =   & \frac{1}{M}  \bigg( \sum\nolimits_{m \in {\mathcal M}_{\MSDC}}  {\mathbbm{1}\left[ {{{\left( {{D^{\MSDC }} \le D_{\max }^{\MSDC},{ \varepsilon ^{\MSDC}} \le \varepsilon _{\max }^{  \MSDC }} \right)}_m}} \right]} \\
& + \sum\nolimits_{m' \in {\mathcal M}_{\LSDC}} {\mathbbm{1}\left[ {{{\left( {{D^{\LSDC }} \le D_{\max }^{\LSDC},{ \varepsilon ^{\LSDC}} \le \varepsilon _{\max }^{  \LSDC }} \right)}_{m'}}} \right]} \bigg), 
\end{aligned}
\end{equation}
where ${{{\left( {{D^{\MSDC }} \!\le\! D_{\max }^{\MSDC },{ \varepsilon ^{\MSDC }} \!\le\! \varepsilon _{\max }^{  \MSDC }} \right)}_m}}$ is an event that the delay and reliability requirements for the communication service under MSDC of $m$th UE are satisfied ($\forall m \in {\mathcal M}_{\MSDC}$), while ${{{\left( {{D^{\LSDC }} \!\le\! D_{\max }^{\LSDC },{ \varepsilon ^{\LSDC }} \!\le\! \varepsilon _{\max }^{  \LSDC }} \right)}_{m'}}}$ is an event that the delay and reliability requirements for the communication service under LSDC of $m'$th UE are satisfied ($\forall m' \in {\mathcal M}_{\LSDC}$). \par

Given a target NA ${\eta _{\max }}$, achieving $\left( {{\eta } \ge {\eta _{\max }},{\eta _{\max }}  \to 1} \right)$ indicates that each service of UEs is ensured.

\enlargethispage{-0.03in}

\section{NA Analysis and Enhancement}

\subsection{NA Analysis}
Although NA evaluation in LA-HetNets inherently requires accounting for spatio-temporal dimensions due to FAP and UE mobility \cite{Wu2026TCOM}, we focus on the NA boundary for analytical tractability and robust network design. Similar to \cite{Wu2026TCOM}, we consider a worst-case scenario where all UEs are activated at the maximum transmission distance ($d = \tilde{d}$). The complex spatio-temporal dynamics are abstracted away under the worst-case scenario. Consequently, the heterogeneity affects the NA boundary solely through the mixed delay constraints. Building on this, we subsequently define the degree of heterogeneity and derive an explicit expression for the equivalent NA to capture its influence and NA boundary. For the convenience of analysis, the $m$th UE can be treated as a typical UE. The effective SNR of the typical UE is denoted as ${\tilde \gamma^{\varsigma} _{\T,\E}}(M,B_{\varsigma})$, which is obtained by substituting $d = \tilde{d}$ into \eqref{eq_achievable_SNR}.

\begin{Definition}\label{Def_heterogeneity}
Assume that $U$ is the degree of heterogeneity. 
There are $U$ distinct delay constraints across the two UE sets, comprising $U_{\LSDC}$ constraints for set ${\mathcal M}_{\LSDC}$ and $U_{\MSDC}$ constraints for set ${\mathcal M}_{\MSDC}$ with $U = {U_{\LSDC}} + {U_{\MSDC}}$. We aggregate these $U$ unique delay constraints to formulate a system heterogeneity matrix, $\mathbf{\Theta} \in \mathbb{R}^{2 \times U}$:
\begin{equation}
    \mathbf{\Theta} \triangleq \begin{bmatrix}
    ( D_{\dd}^\LSDC )_1, \cdots , ( D_{\dd}^\LSDC )_{U_{\LSDC}}, ( D_{\dd}^\MSDC )_1,   \cdots  , ( D_{\dd}^\MSDC )_{U_{\MSDC}} \\
    ( D_{\p}^\LSDC )_1,  \cdots  , ( D_{\p}^\LSDC )_{U_{\LSDC}}, ( D_{\p}^\MSDC )_1,  \cdots , ( D_{\p}^\MSDC )_{U_{\MSDC}}
    \end{bmatrix},
\end{equation}
where
\begin{equation}\label{eq_delay_constraints_components}
   \! (D_{\max}^\varsigma )_{u_{\varsigma}} \!=\! (D_{\dd}^\varsigma )_{u_{\varsigma}} \!+ (D_{\p}^\varsigma )_{u_{\varsigma}}, \forall u_{\varsigma} \!\in [1, U_{\varsigma}], \varsigma \!\in \{\LSDC, \MSDC\}.
\end{equation}
\end{Definition}

\begin{figure*}[t]
\centering
\begin{align}\label{eq_NA_all_REC}
   \eta _{{\mathrm{E}}}^{\mathrm{H}}( {U,\mathbf{\Theta},N_{\replica} },{\Omega _{\mathrm{R}}} )  
    = \prod\limits_{\varsigma  \in \{ {\LSDC,\MSDC} \}} { \min \bigg(1, \frac{R^{\varsigma}(B_{\varsigma},{\tilde \gamma^{\varsigma} _{\T,\E}}(M,B_{\varsigma}))}{ \max \big\{ {R_{\mathrm{th}}^{\varsigma}\big( (D_{\p}^\varsigma)_{u_{\varsigma}},(D_{\dd}^\varsigma)_{u_{\varsigma}}, N_{\replica}, {\Omega _{\mathrm{R}}} \big)}, \forall u_{\varsigma} \in [1, U_{\varsigma}] \big\} } \bigg)   }    .
\end{align}
\vspace{-1em}
\hrulefill
\end{figure*}

\begin{proposition}\label{Proposition_LB_NA}
Under the unified MC broadcasting scheme with $N_{\replica}$ subchannels for transmitting packet replicas, the equivalent NA in FAP CoMP-enabled LA-HetNets is expressed in \eqref{eq_NA_all_REC} at the top of the next page. Note that $R_{\mathrm{th}}^{\varsigma}(D_{\p}^{\varsigma}, D_{\dd}^{\varsigma},N_{\replica}, {\Omega _{\mathrm{R}}})$ denotes the threshold of service rate to satisfy the QoS requirements ($D_{\max}^{\varsigma}, \varepsilon_{\max}^{\varsigma}$) that is given by
\begin{align}\label{eq_rate_threshold_MSDC_LSDC}
& R_{\mathrm{th}}^{\varsigma}(D_{\p}^{\varsigma}, D_{\dd}^{\varsigma}, N_{\replica}, {\Omega _{\mathrm{R}}})
     =  \frac{{N_{\f}^{\varsigma}}(D_{\max }^{\varsigma})}{N_{\f}^{\dd}(D_{\dd}^{\varsigma})} \\
    & \times \! \Bigg[{{\theta _{\varsigma}} \!+\! \Bigg(\!\frac{{\sigma _{\varsigma}^2}}{{N_{\f}^{\varsigma}}(D_{\max }^{\varsigma})\big({{{\big( {{\varepsilon^{\varsigma} _{\max }} - {\hat \varepsilon_{\p}^{\varsigma}}( {{D_{\p}^{\varsigma}}},N_{\replica},{\Omega _{\mathrm{R}}} ) } \big)}^{\frac{1}{{{N_{\replica}}}}} - \varepsilon_{\decoding} ^{\varsigma} }}\big)}}\!\Bigg)^{\!\!\frac{1}{2}\!}\Bigg]\!. \notag
\end{align}
Finally, ${\hat \varepsilon_{\p}^{\varsigma}}( {{D_{\p}^{\varsigma}}}, N_{\replica},{\Omega _{\mathrm{R}}} )$ is the upper bound (UB) on the processing delay violation probability, expressed as
\begin{align}\label{eq_processing_delay_prob_UB}
    & {\hat \varepsilon_{\p}^{\varsigma}}( {{D_{\p}^{\varsigma}}}, N_{\replica},{\Omega _{\mathrm{R}}} ) =
     \frac{1}{{{{\big( {\frac{{\Omega _{\mathrm{R}}}}{N_{\replica}} - ({\theta _{\LSDC}} + {\theta _{\MSDC}})( {{{\bar \Omega }_{\p}}{\varpi ^{{\varsigma}}} + {\Omega _{\mathrm{O}}}} )} \big)}^2}}} \\
    & \times \!\! \bigg[\! \sigma _{\p}^2{\big(\theta _{\LSDC}^2 + \theta _{\MSDC}^2\big)} \!+\! \frac{{(\sigma _{\LSDC}^2 + \sigma _{\MSDC}^2) \big( \sigma _{\p}^2{\varpi ^{{\varsigma}}} + {{( {{{\bar \Omega }_{\p}}{\varpi ^{{\varsigma}}} + {\Omega _{\mathrm{O}}}} )}^2}\big)}}{N^{\p}_{\f}(D_{\p}^{\varsigma})}  \!\bigg]\!.\notag
\end{align}
\end{proposition}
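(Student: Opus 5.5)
The plan is to build the expression in \eqref{eq_NA_all_REC} in three layers: first the PHY-layer rate threshold \eqref{eq_rate_threshold_MSDC_LSDC} for a single delay split, then the processing-delay bound \eqref{eq_processing_delay_prob_UB}, and finally the worst-case aggregation over UEs and delay constraints. Throughout we use the worst-case scenario $d=\tilde d$, so every UE in $\mathcal M_\varsigma$ sees the same effective SNR ${\tilde \gamma^{\varsigma} _{\T,\E}}(M,B_{\varsigma})$ and the same service rate $R^\varsigma$ from \eqref{eq_Rate_LDC} or \eqref{eq_rate_MDC}. Since the indicator in \eqref{eq_NA_define} then depends only on the UE's delay constraint, the equivalent NA can be written per service class.

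\textbf{Step 1: the rate threshold.} Fix $(D_{\p}^\varsigma,D_{\dd}^\varsigma)$. By \eqref{eq_reliability_constraint}, it suffices to have $\varepsilon_{\dd}^\varsigma\le \varepsilon_{\max}^\varsigma-\hat\varepsilon_{\p}^\varsigma$.

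With $N_{\replica}$ independent subchannels carrying replicas, a packet is lost on the PHY layer only if all replicas fail. Hence $\varepsilon_{\dd}^\varsigma=(\varepsilon_{\mathrm{single}})^{N_{\replica}}$, and the per-subchannel budget becomes $(\varepsilon_{\max}^\varsigma-\hat\varepsilon_{\p}^\varsigma)^{1/N_{\replica}}$.

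On each subchannel, the failure probability splits (to first order) into the decoding error $\varepsilon_{\decoding}^\varsigma$ plus the outage probability. The outage budget is therefore $\epsilon_o=(\cdot)^{1/N_{\replica}}-\varepsilon_{\decoding}^\varsigma$.

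Outage means that the arrivals accumulated over the $N_{\f}^\varsigma(D_{\max}^\varsigma)$ frames cannot be served in the $N_{\f}^{\dd}$ transmission frames. The aggregate arrival has mean $N_{\f}^\varsigma\theta_\varsigma$ and variance $N_{\f}^\varsigma\sigma_\varsigma^2$, since the arrivals are unknown and general. We apply the one-sided Chebyshev (Cantelli-type) bound in the simplified form
\begin{equation*}
\mathbb P\{A>N_{\f}^\varsigma\theta_\varsigma+x\}\le N_{\f}^\varsigma\sigma_\varsigma^2/x^2 .
\end{equation*}
Setting this equal to $\epsilon_o$ gives $x=\sqrt{N_{\f}^\varsigma\sigma_\varsigma^2/\epsilon_o}$. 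Requiring $N_{\f}^{\dd}R\ge N_{\f}^\varsigma\theta_\varsigma+x$ and dividing through yields exactly the form $\frac{N_{\f}^\varsigma}{N_{\f}^{\dd}}[\theta_\varsigma+(\sigma_\varsigma^2/(N_{\f}^\varsigma\epsilon_o))^{1/2}]$.

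\textbf{Step 2: the processing bound.} For the G/G/1-PS server, each replica stream receives rate $\Omega_{\mathrm R}/N_{\replica}$, because the concatenated broadcast duplicates the workload $N_{\replica}$ times. The workload per frame has mean $(\theta_{\LSDC}+\theta_{\MSDC})(\bar\Omega_{\p}\varpi^\varsigma+\Omega_{\mathrm O})$.

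Its variance follows from the law of total variance for a random sum of per-packet cycle counts. This gives a term from the cycle randomness, $\sigma_{\p}^2$ times squared rates, and a term from arrival burstiness, $(\sigma^2_{\LSDC}+\sigma^2_{\MSDC})$ times the second moment of per-packet cycles. The burstiness term is averaged over $N_{\f}^{\p}$ frames, which divides it by $N_{\f}^{\p}$.

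A processing-delay violation requires the workload arriving over $D_{\p}^\varsigma$ to exceed the service capacity. The excess of capacity over mean load is $\Omega_{\mathrm R}/N_{\replica}-(\theta_{\LSDC}+\theta_{\MSDC})(\bar\Omega_{\p}\varpi^\varsigma+\Omega_{\mathrm O})$. Chebyshev's inequality with this excess as the deviation gives \eqref{eq_processing_delay_prob_UB}.

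I expect this step to be the main obstacle, because the exact variance bookkeeping has to reproduce the stated terms. The $\sigma_{\p}^2\varpi^\varsigma$ versus $\sigma_{\p}^2$ scaling and the placement of $1/N_{\f}^{\p}$ are the delicate points. I would first follow the PS-server delay-bound derivation in \cite{She2019IoTJ} and adapt it to the aggregated unified-broadcast load.

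\textbf{Step 3: aggregation.} Under the unified broadcast, all UEs of class $\varsigma$ receive the same rate $R^\varsigma$. That rate satisfies every one of the $U_\varsigma$ delay constraints iff it exceeds the maximum of $R_{\mathrm{th}}^\varsigma$ over $u_\varsigma$.

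Following the equivalent-NA convention of \cite{Wu2026TCOM}, the fraction of demand supported in class $\varsigma$ is $\min(1,R^\varsigma/\max_{u_\varsigma}R_{\mathrm{th}}^\varsigma)$. This ratio equals $1$ exactly when all QoS requirements hold, which makes it consistent with $\eta\ge\eta_{\max}\to1$. Finally, the two classes share the broadcast resource independently through $B_{\LSDC}$ and $B_{\MSDC}$, so the class-wise availabilities multiply, giving \eqref{eq_NA_all_REC}.
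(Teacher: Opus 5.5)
Your plan is the paper's own argument: the replica exponent with additive decoding error as in \eqref{eq_outage_prob}, Chebyshev on the arrival count giving \eqref{eq_transmission_failure_prob_UB}, the budget $\hat\varepsilon_{\dd}^\varsigma\le\varepsilon_{\max}^\varsigma-\hat\varepsilon_{\p}^\varsigma$ inverted into \eqref{eq_rate_threshold_MSDC_LSDC}, Chebyshev on the processing-violation event \eqref{eq_processing_delay_prob_LSDC} with per-replica capacity $\Omega_{\mathrm R}/N_{\replica}$, then the maximum over $u_\varsigma$ and the product of $\min(1,\cdot)$ factors. Step 1 is correct as written.

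The one step that does not deliver what you claim is the variance in Step 2. Take a genuine random sum $W=\sum_{i=1}^{\Phi}X_i$ with i.i.d. per-packet cycle counts independent of $\Phi$. The law of total variance gives $\mathrm{Var}\,W=\mathbb{E}[\Phi]\,\mathrm{Var}X+\mathrm{Var}\Phi\,(\mathbb{E}X)^2$. After dividing by $(N_{\f}^{\p})^2$, both terms carry $1/N_{\f}^{\p}$, the rates enter linearly, and burstiness multiplies $(\mathbb{E}X)^2$ rather than the second moment. So that model does not produce your terms.

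The structure of \eqref{eq_processing_delay_prob_UB} instead comes from the product form actually written in \eqref{eq_processing_delay_prob_LSDC}. There a single cycle variable $X=\Omega_{\p}\varpi^\varsigma+\Omega_{\mathrm O}$ multiplies the whole count, and $\mathrm{Var}(\Phi X)=(\mathbb{E}\Phi)^2\,\mathrm{Var}X+\mathrm{Var}\Phi\,\mathbb{E}[X^2]$. This yields an undivided squared-rate term and a second-moment burstiness term over $N_{\f}^{\p}$. Getting $\theta_{\LSDC}^2+\theta_{\MSDC}^2$ rather than $(\theta_{\LSDC}+\theta_{\MSDC})^2$ further requires independent cycle variables for the two classes.

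Even then, the choice $\mathrm{Var}X=\sigma_{\p}^2\varpi^\varsigma$ that matches the second term turns the first into $\sigma_{\p}^2\varpi^\varsigma(\theta_{\LSDC}^2+\theta_{\MSDC}^2)$. So the $\varpi^\varsigma$ mismatch you flagged cannot be removed by careful bookkeeping. The paper offers no derivation that resolves it: it states \eqref{eq_processing_delay_prob_UB} directly after quoting Chebyshev's inequality.

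There are also smaller points in Step 3. Every threshold comes from Chebyshev upper bounds, so $R^\varsigma\ge\max_{u_\varsigma}R_{\mathrm{th}}^\varsigma$ is sufficient but not in general necessary for the QoS requirements. The ratio being $1$ therefore implies they hold, not that they hold ``exactly when'' it is $1$; the paper phrases \eqref{eq_LB_all_services_LSDC_MSDC} as a guarantee.

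The product in \eqref{eq_NA_all_REC} needs no independence argument, and none is available, since $\hat\varepsilon_{\p}^\varsigma$ couples the classes through $\theta_{\LSDC}+\theta_{\MSDC}$ and $\sigma_{\LSDC}^2+\sigma_{\MSDC}^2$. It is simply adopted as the definition of the equivalent NA because it equals $1$ iff both factors do.

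Finally, state the validity conditions $\Omega_{\mathrm R}/N_{\replica}>(\theta_{\LSDC}+\theta_{\MSDC})(\bar\Omega_{\p}\varpi^\varsigma+\Omega_{\mathrm O})$, $\hat\varepsilon_{\p}^\varsigma<\varepsilon_{\max}^\varsigma$, and $(\varepsilon_{\max}^\varsigma-\hat\varepsilon_{\p}^\varsigma)^{1/N_{\replica}}>\varepsilon_{\decoding}^\varsigma$. Without them Chebyshev does not apply or the threshold is not defined.
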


\begin{proof}
    See Appendix \ref{appendix_Proposition_LB_NA}.
\end{proof}

\begin{Remark}
    The target NA achievement, i.e., $( {{\eta } \ge {\eta _{\max }},{\eta _{\max }}  \to 1} )$, is strictly guaranteed if the equivalent NA in \eqref{eq_NA_all_REC} attains its theoretical limit value of $1$, i.e., $\eta_{\mathrm{E}}^{\mathrm{H}} = 1$.
\end{Remark}

\noindent
From \textbf{Proposition~\ref{Proposition_LB_NA}}, we can obtain the following properties:

\enlargethispage{-0.03in}

\begin{property}\label{property_U}
    As the heterogeneity degree $U$ increases, the equivalent NA decreases. This is because, under the unified MC broadcast scheme mentioned in Section \ref{sub_sec_transmission}, greater heterogeneity inherently amplifies $\theta _{\varsigma}$ and $\sigma _{\varsigma}^2$. As indicated by \eqref{eq_rate_threshold_MSDC_LSDC}, any surges in $\theta _{\varsigma}$ and $\sigma _{\varsigma}^2$ demand stronger communication capability, i.e., a higher service rate, which degrades NA under given resources. By extension, any factors that demand stronger communication capability, such as more stringent QoS requirements, will inevitably lead to NA degradation.
\end{property}

\begin{property}\label{property_CPU2}
    Increasing computing resources ($\Omega_{\mathrm{R}}$) helps resolve MAC-layer packet accumulation, thereby reducing the processing delay violation probability. This contributes to decreasing the communication resources required to satisfy conditions \eqref{eq_rate_threshold_MSDC_LSDC}, ultimately enhancing NA. However, if communication resources, such as subchannels are scarce, adding only computing resources cannot resolve the PHY-layer transmission failures caused by dynamic channels. Consequently, the target NA cannot be achieved.
\end{property}

To address the issues in \textbf{Properties~\ref{property_U}--\ref{property_CPU2}}, we propose a cross-layer optimization that enables computing-communication resource interchange, which is detailed in the following.

\subsection{NA Enhancement}\label{Sec_NA_enhancement}
The fundamental mechanism lies in dynamically restructuring the delay composition. Under the delay constraints, adding computing resources allows for a reduced processing delay without increasing MAC-layer unreliability. Fundamentally, this reduction inversely extends the allowable transmission delay, equating to an expansion of time-frequency resources and thereby effectively converting computing resources into communication resources. However, this delay adjustment inherently alters the balance between processing delay violations and transmission failures. With augmented computing resources, optimally adjusting the processing and transmission delays under a given delay constraint maximizes reliability, thereby enhancing the probability of satisfying the QoS requirements for traffic under this constraint. Consequently, individually optimizing the delay compositions for mixed delay constraints effectively enhances NA. Therefore, we formulate the following optimization problem:
\begin{align}\label{eq_problem_delay}
{\mathop {\max }\limits_{ \mathbf{\Theta} } } \quad & \eta _{\mathrm{E}}^{\mathrm{H}}( {U,\mathbf{\Theta},N_{\replica} },{\Omega _{\mathrm{R}}} ) \\[-3pt]
{{\mathrm{s.t.}}}  \quad & \eqref{eq_delay_constraints_components}. \notag
\end{align}%
The optimal values of $\mathbf{\Theta}$, denoted by $\mathbf{\Theta} ^\star$ for the problem \eqref{eq_problem_delay}, can be obtained using the exhaustive search method. Since optimizing $\mathbf{\Theta}$ is element-wise decoupled, the independent search complexity per delay constraint—at a one-frame granularity—is given by ${\mathcal O} \big( {{\Nf^{\p}} \big( ( {D_{\max }^{\varsigma}} )_{u_{\varsigma}} )}  \big) $. Given realistic frame durations, computational delay is negligible.\footnote{Using MATLAB (Intel i5-1235U, 16GB RAM) with $T_{\f} = 0.1$ ms, the computational delay merely $\sim 0.0001$ ms for services under MSDC ($D^{\MSDC}_{\max} = 0.5$ ms) and $\sim 0.01$ ms for services under LSDC ($D^{\LSDC}_{\max} = 50$ ms), trivially satisfying $\ll D^{\varsigma}_{\max}$, $\varsigma \in \{\LSDC, \MSDC\}$.}


\addtolength{\topmargin}{0.1in}

\enlargethispage{-0.1in}

\section{Simulation and Numerical Results}\label{Sec_sim_num}
In this section, we validate our analysis and proposed optimization  through numerical simulations based on $10^{10}$ Monte-Carlo trials. We consider a service area with a radius $W_{\mathrm{D}}$ = $2,000$ m containing $40$ UEs, including $M_{\MSDC}$ = 5 UEs with unicast services under MSDC and $M_{\LSDC}$ = $35$ UEs with multicast services under LSDC. The number of FAPs is $L$ = 20. The aerial UE flight altitude range is ${\mathcal H} = [ {100,400} ]$ m. The $\kappa$-$\mu$ shadowed fading parameters are set as $\kappa \to 0$, $\mu= 3$, and $\bar{m} \to \infty$ to represent Nakagami-$m$ fading, which is widely used in aerial networks \cite{Wu2026TCOM}. The SNR loss due to wireless fronthaul and SNR gap due to FBL are set as $\phi_{\mathrm{wf}} = 0.8$ and $\phi_{\mathrm{fb}} = 1.5$, respectively. The service arrival process follows a Poisson distribution, where the average arrival rates for services under MSDC and LSDC are $20$ packets/s and $100$ packets/s, respectively \cite{Dong2021TWC}. The dOPL requirement is $\varepsilon _{\max }^{  \varsigma }$ = $10^{-5}$. The decoding error probability is $\varepsilon_{\decoding}^{\MSDC} = \frac{1}{4}\varepsilon _{\max }^{  \MSDC }$. The total delay bound under MSDC is $D_{\max }^{{\MSDC}}$ = 0.5 ms with relative velocity ${v_{\mathrm{r}}}$ = $30$ m/s \cite{Wu2026TCOM}. For services under LSDC, the total delay bounds $D_{\max }^{{\LSDC}}$ are $[50:5:100]$ ms with relative velocity ${v_{\mathrm{r}}}$ = $50$ m/s \cite{Wu2026TCOM}. The degree of heterogeneity $U$ = $Z$, $Z \ge 2$ indicates that the scenario contains services under MSDC and LSDC with $ D_{\max }^{{\LSDC}}( 1 ) \sim D_{\max }^{{\LSDC}}( {Z - 1} )$; $U$ = 1 refers to the scenario that contains services under MSDC. Different services may have different delay constraints. Under the unified MC broadcasting scheme, the arrival rates for downlink broadcasting are ${\theta_{\MSDC}}$ = $20 \times M_{\MSDC}$ packets/s and ${\theta_{\LSDC}}$ = $100 \times (U-1)$ packets/s with $\theta_{\varsigma}$ = $\sigma _\varsigma ^2$, $\varsigma \in \{\LSDC, \MSDC\}$. The target NA is ${\eta _{\max }} \to 1$, and whether it can be achieved is determined by evaluating whether the equivalent NA $\eta_{\mathrm{E}}^{\mathrm{H}} = 1$. The remaining parameters are listed in Table \ref{simulation_settings}.

\begin{table}[t]
\setlength{\abovecaptionskip}{0cm}
\caption{Simulation Parameters \cite{Wu2026TCOM, Dong2021TWC}}\label{simulation_settings}
\centering
\scriptsize
\begin{tabular}{|m{50mm}<\centering|m{15 mm}<\centering|}
\hline
\bfseries Parameter & \bfseries Value \\
\hline
\hline
\!\!\!Duration of each frame ${T_{\f}}$ (equals to TTI) \!\!\!& 0.1 ms \\
\hline
Duration of control signaling $T_{\mathrm{f}}^{( {\mathrm{c}} )}$ & 0.01 ms \\
\hline
Duration of data transmission $T_{\mathrm{f}}^{( {\mathrm{d}} )}$ & 0.09 ms \\
\hline
Subcarrier spacing $B_{0}$ & 15 kHz \\
\hline
Carrier frequency $f_{\mathrm{c}}$  & 2 GHz \\
\hline
Bandwidth of each subchannel $B$ & $ \lfloor \frac{{B_{\mathrm{C}}}}{{B_0}} \rfloor {B_0}$ \\
\hline
Bandwidth of each subchannel for LSDC/MSDC &  \!$B_{\LSDC}$/$B_{\MSDC}$\! = \!$\frac{1}{2}B$\! \\
\hline
Channel coherence bandwidth ${B_{\mathrm{C}}}$ & 0.5 MHz \\
\hline
Total bandwidth ${{B^{{\mathrm{tot}}}}}$ &  60 MHz \\
\hline
\!\!Number of independent subchannels $N_{\replica}$ \!\! & $\lfloor \frac{{{B^{{\mathrm{tot}}}}}}{{B_{\mathrm{C}}}} \rfloor$ \\
\hline
Noise power spectral density $N_{0}$ & \!\!\!-174 dBm/Hz\!\!\! \\
\hline
Transmit power ${p_{\mathrm{t}}} $/${p_{\mathrm{u}}} $ & 10/5 dBm \\
\hline
Packet size ${\varpi ^{\LSDC}}$/${\varpi ^{\MSDC}}$ & 1000/64 bits \\
\hline
FAP flight radius/altitude & 150/200 m \\
\hline
\end{tabular}
\vspace{-0.5em}
\end{table}

\begin{figure}[t]
\vspace{-0.3cm}
\centering
\begin{minipage}[t]{0.5\textwidth}
    \centering
    \includegraphics[width=0.9\textwidth]{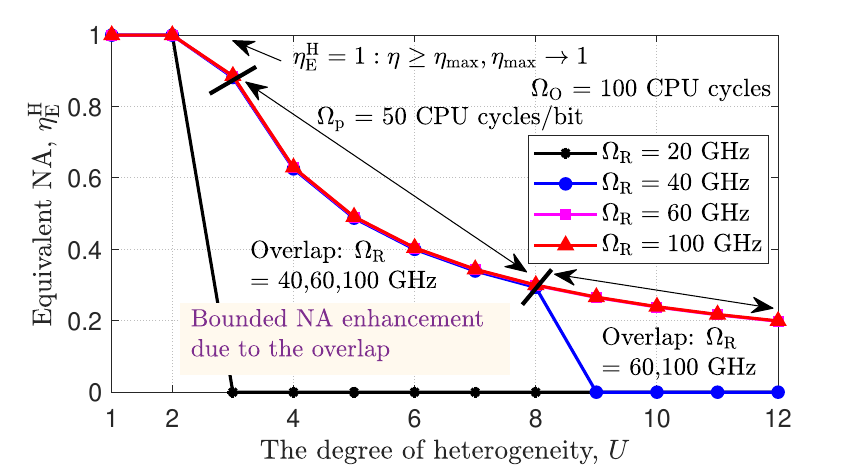}
    \vspace{-0.3em}
    \caption{Equivalent NA  vs. $U$ under ${D_{\p}^{\varsigma}}/{D_{\max}^{\varsigma}}$ = $\frac{2}{5}$ with different $\Omega_{\mathrm R}$.} \label{NA_vs_U}
\end{minipage}\\
\hfill
\begin{minipage}[t]{0.5\textwidth}
    \centering
    \includegraphics[width=0.9\textwidth]{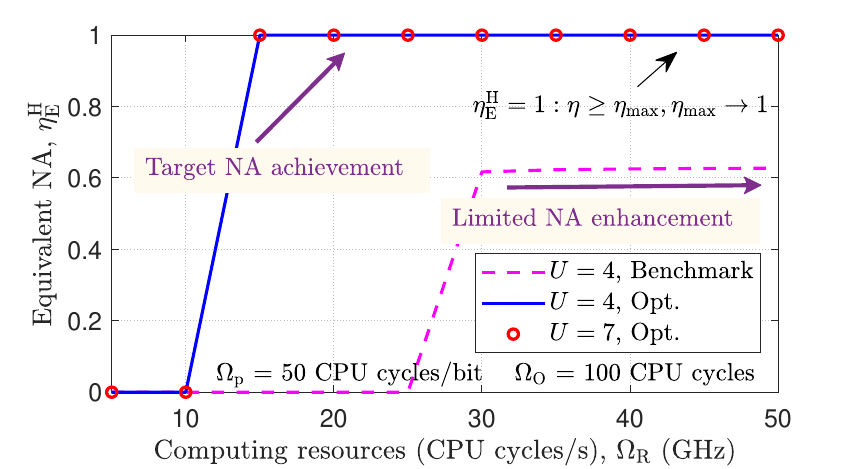}
    \vspace{-0.3em}
    \caption{Equivalent NA  vs. $\Omega_{\mathrm R}$ with different $U$.} \label{Comput_vs_NA_diff_U}
    \vspace{0.2em}
\end{minipage}
\end{figure}

Figure \ref{NA_vs_U} shows the equivalent NA $\eta^{\mathrm{H}}_{\E}$ versus the degree of heterogeneity $U$ under different computing resources $\Omega_{\mathrm R}$. As observed, $\eta^{\mathrm{H}}_{\E}$ decreases as $U$ increases. Furthermore, while increasing $\Omega_{\mathrm R}$ improves $\eta^{\mathrm{H}}_{\E}$ for a given $U$, this performance enhancement is bounded. This limitation is evident from the overlapping curves for different $\Omega_{\mathrm R}$ values. These results effectively verify \textbf{Properties~\ref{property_U}--\ref{property_CPU2}}.

\enlargethispage{-0.1in}

Figure \ref{Comput_vs_NA_diff_U} plots the equivalent NA $\eta^{\mathrm{H}}_{\E}$ against the computing resources (the number of CPU cycles/s, $\Omega_R$) under different heterogeneity degrees $U$. The legends ``Benchmark'' and ``Opt.'' represent the benchmark scheme with fixed $\frac{D_{\p}^{\varsigma}}{D_{\max}^{\varsigma}}$ = $\frac{2}{5}$ and the proposed cross-layer optimization, respectively. Compared with the benchmark, the proposed optimization enables the target NA to be achieved as the available computing resources increase for the same degree of service heterogeneity. More importantly, it can sustain the target NA even under a higher degree of heterogeneity, demonstrating that the proposed computing-communication resource interchange not only improves NA but also enlarges the range of heterogeneous services that can be reliably supported.

\enlargethispage{-0.15in}


\vspace{-0.1em}
\section{Conclusion}\label{Sec_Conclusion}
\vspace{-0.4em}
In this paper, we addressed the network unavailability in LA-HetNets due to heterogeneity and cross-regional traffic. Our analysis revealed that supplementing computing resources improves NA by resolving MAC-layer unreliability. However, under greater heterogeneity, merely increasing computing resources is insufficient to achieve the target NA since NA degrades sharply with extended heterogeneity due to amplified communication capability demands. To overcome this, we proposed a cross-layer optimization method to realize computing-communication resource interchange. Simulation results validated that adopting this resource interchange approach can effectively enhance NA to achieve its target value and support significantly greater heterogeneity than conventional computing-only expansion strategies.

\vspace{-0.1em}
\appendices
\section{Proof of Proposition \ref{Proposition_LB_NA}}
\label{appendix_Proposition_LB_NA}
\vspace{-0.5em}
For given ${D_{\dd}^{\varsigma}}$, ${D_{\p}^{\varsigma}}$, and ${D_{\dd}^{\varsigma}} + {D_{\p}^{\varsigma}} = D_{\max }^{\varsigma}$, under the unified MC broadcasting scheme with $N_{\replica}$ subchannels for transmitting packet replicas, the transmission failure probability of the $m$th UE based on the effective achievable rate can be given by \cite{Li2024} 
\begin{equation}\label{eq_outage_prob}
\setlength{\abovedisplayskip}{0pt}
\setlength{\belowdisplayskip}{0pt}
    {\varepsilon _{\dd}^{\varsigma}}( {{D_{\dd}^{\varsigma}}} ) \triangleq \Big[{{\mathbb{P}_{N_{\varsigma}}}\big\{ {N_{\f}^{\dd}( {{D_{\dd}^{\varsigma}}} )R^{\varsigma} < N_{\varsigma}} \big\}}+ \varepsilon_{\decoding}^{\varsigma}\Big]^{N_{\replica}},
\end{equation}
where ${{\mathbb{P}_{N_{\varsigma}}}\big\{ {N_{\f}^{\dd}( {{D_{\dd}^{\varsigma}}} )R^{\varsigma} < N_{\varsigma}} \big\}}$ denotes the transmission outage probability \cite{Alsenwi2019CL}, while  $N_{\varsigma}$ and $ N_{\f}^{\dd}( {{D_{\dd}^{\varsigma}}} )R^{\varsigma}$ represent the number of random arrival packets within ${D_{\max}^{\varsigma}}$ and the number of successfully transmitted packets within ${D_{\dd}^{\varsigma}}$, respectively. For given ${D_{\p}^{\varsigma}}$, the processing delay violation probability can be expressed as \cite{She2019IoTJ}
\begin{equation}\label{eq_processing_delay_prob_LSDC}
\setlength{\abovedisplayskip}{0pt}
\setlength{\belowdisplayskip}{0pt}
    {\varepsilon _{\p}^{\varsigma}}( {{D_{\p}^{\varsigma}}} ) \triangleq {\mathbb{P}_{\Phi _{\varsigma} }}\Big( {{D_{\p}^{\varsigma}} < \frac{N_{\replica}{\Phi _{\varsigma} ( {{\Omega _{\p}}{\varpi ^{{\varsigma}}} + {\Omega _{\mathrm{O}}}} )}}{{{\Omega _{\mathrm{R}}}}}{T_{\f}}} \Big),
\end{equation}
where $\Phi _{\varsigma} \! =\! N^{\varsigma,\p}_{\LSDC} + N^{\varsigma,\p}_{\MSDC}$ is the total number of arriving packets, while $N^{\varsigma,\p}_{\LSDC}$ and $N^{\varsigma,\p}_{\MSDC}$ are the numbers of arriving packets within ${D_{\p }^{\varsigma}}$ for services under LSDC and MSDC, respectively.

According to Chebyshev's inequality, for a RV $X$ with mean $\bar X$ and variance $\sigma _X^2$, the probability that $X$ is greater than a threshold $A$ can be bounded by \cite{LiuTCOM2025}
\begin{equation}\label{eq_appendix_Chebyshev_inequality_1}
\setlength{\abovedisplayskip}{0pt}
\setlength{\belowdisplayskip}{0pt}
    {\mathbb{P}_X}\{ {A \le X} \} \le {\sigma _X^2}/{{{( {A - \bar X} )}^2}}, A > \bar X.
\end{equation}

Then, we can obtain the UB on ${\varepsilon _{\p}^{\varsigma}}( {{D_{\p}^{\varsigma}}} )$ as ${\hat \varepsilon_{\p}^{\varsigma}}( {{D_{\p}^{\varsigma}}} )$, which is expressed in \eqref{eq_processing_delay_prob_UB}. Similarly, ${\varepsilon _{\dd}^{\varsigma}}( {{D_{\dd}^{\varsigma}}} )$ can be bounded by
\begin{equation}\label{eq_transmission_failure_prob_UB}
\setlength{\abovedisplayskip}{0pt}
\setlength{\belowdisplayskip}{0pt}
    \!{\hat \varepsilon_{\dd} ^{\varsigma}}( {{D_{\dd} ^{\varsigma}}} ) \!=\! \Bigg[\! {\frac{{N_{\f}^{\varsigma}(D_{\max}^{\varsigma})\sigma _{\varsigma}^2}}{{{{\big( {N_{\f}^{\dd}( {{D_{\dd}^{\varsigma}}} )R^{\varsigma} - N_{\f}^{\varsigma}(D^{\varsigma}_{\max}){\theta _{\varsigma}}} \big)}^2}}}} + \varepsilon_{\decoding}^{\varsigma} \Bigg]^{\!N_{\replica}}\!\!\!.\!
\end{equation}

If ${\hat \varepsilon_{\p}^{\varsigma}}( {{D_{\p}^{\varsigma}}} ) < \varepsilon_{\max}^{\varsigma}$, the QoS requirements ($D_{\max}^{\varsigma}, \varepsilon_{\max}^{\varsigma}$) can be satisfied under ${\varepsilon_{\dd} ^{\varsigma}}( {{D_{\dd} ^{\varsigma}}} ) \le \varepsilon_{\max}^{\varsigma} - {\hat \varepsilon_{\p}^{\varsigma}}( {{D_{\p}^{\varsigma}}} )$. Then, the QoS requirements ($D_{\max}^{\varsigma}, \varepsilon_{\max}^{\varsigma}$) can be satisfied under ${\hat \varepsilon_{\dd} ^{\varsigma}}( {{D_{\dd} ^{\varsigma}}} ) \le \varepsilon_{\max}^{\varsigma} - {\hat \varepsilon_{\p}^{\varsigma}}( {{D_{\p}^{\varsigma}}} )$. From \eqref{eq_transmission_failure_prob_UB}, the condition for satisfying the QoS requirements ($D_{\max}^{\varsigma}, \varepsilon_{\max}^{\varsigma}$) can be expressed as
\begin{align}\label{eq_condition_LSDC_MSDC}
& R^{\varsigma}  \ge
 \frac{{N_{\f}^{\varsigma}}(D_{\max }^{\varsigma})}{N_{\f}^{\dd}(D_{\dd}^{\varsigma})} \\
 & \times  \Bigg[{{\theta _{\varsigma}} + \Bigg(\!\frac{{\sigma _{\varsigma}^2}}{{N_{\f}^{\varsigma}}(D_{\max }^{\varsigma})\big({{{\big( {{\varepsilon^{\varsigma} _{\max }} - {\hat \varepsilon_{\p}^{\varsigma}}( {{D_{\p}^{\varsigma}}} ) } \big)}^{\frac{1}{{{N_{\replica}}}}} - \varepsilon_{\decoding} ^{\varsigma} }}\big)}}\!\Bigg)^{\frac{1}{2}}\Bigg], \notag
\end{align}
where the right-hand side of \eqref{eq_condition_LSDC_MSDC} is defined as the threshold of service rate to satisfy the QoS requirements ($D_{\max}^{\varsigma}, \varepsilon_{\max}^{\varsigma}$), denoted by $R_{\mathrm{th}}^{\varsigma}(D_{\p}^\varsigma,D_{\dd}^\varsigma)$. Note that $R^{\varsigma}$ remains constant regardless of the mixed delay constraints. However, $R_{\mathrm{th}}^{\varsigma}\big( (D_{\p}^\varsigma)_{u_{\varsigma}},(D_{\dd}^\varsigma)_{u_{\varsigma}} \big)$ generally varies across different delay constraints. To ensure the QoS requirements for all UEs (i.e., $\eta \ge {\eta _{\max }},{\eta _{\max }}  \to 1$), $R^{\varsigma}$ must satisfy the maximum rate threshold for both LSDC and MSDC, where the maximum rate threshold is given by
\begin{align}
\!  R_{\mathrm{th},\max}^{\varsigma} = \max \big\{ {R_{\mathrm{th}}^{\varsigma}\big( (D_{\p}^\varsigma)_{u_{\varsigma}},(D_{\dd}^\varsigma)_{u_{\varsigma}} \big)}, \forall u_{\varsigma} \in [1, U_{\varsigma}] \big\}.\!
\end{align}


Therefore, ($\eta \ge {\eta _{\max }},{\eta _{\max }}  \to 1$) can be guaranteed when the following condition is satisfied:
\begin{align}\label{eq_LB_all_services_LSDC_MSDC}
\!\!  \prod\nolimits_{\varsigma  \in \{ {\LSDC,\MSDC} \}} \!\! {\min \bigg(1, \frac{R^{\varsigma}}{ R_{\mathrm{th},\max}^{\varsigma} }\bigg)} = 1.
\end{align}

Motivated by this rigorous physical equivalence, the product metric on the left-hand side of \eqref{eq_LB_all_services_LSDC_MSDC} can be regarded as the equivalent NA to evaluate whether the target NA is achieved. Consequently, we obtain \eqref{eq_NA_all_REC}, and prove \textbf{Proposition \ref{Proposition_LB_NA}}.

\enlargethispage{-0.15in}

\vspace{-0.3em}
\bibliographystyle{IEEEtran}
\bibliography{myref}

\end{document}